\newtheorem{definition}{Definition}
\newtheorem{lemma}{Lemma}
\newtheorem{theorem}{Theorem}
\begin{document}


\title{
Computation-Limited Signals: A Channel Capacity Regime Constrained by Computational Complexity
}

\author{Saulo Queiroz, 
João P. Vilela, and Edmundo Monteiro, \IEEEmembership{Senior IEEE}
\thanks{This work is funded by The Science and Technology Development Fund, Macau SAR. (File no. 0044/2022/A1) 
and Agenda Mobilizadora Sines Nexus (ref. No. 7113), supported by the Recovery and Resilience Plan (PRR) and by the European Funds Next Generation EU.}
\thanks{Saulo Queiroz (sauloqueiroz@utfpr.edu.br) is with the Academic Department of Informatics of the Federal University of Technology Paran\'a (UTFPR), Ponta Grossa, PR, Brazil, the Centre for Informatics and Systems of the University of Coimbra (CISUC) and the University of Porto, Portugal. 
}
\thanks{Jo\~ao P. Vilela (jvilela@fc.up.pt) is with CRACS/INESCTEC, CISUC and the Department of Computer Science, Faculty of Sciences, University of Porto, Portugal.}
\thanks{Edmundo Monteiro (edmundo@dei.uc.pt) is with the Department of Informatics Engineering of the University of Coimbra and CISUC, Portugal.}
}


\markboth{Journal of \LaTeX\ Class Files,~Vol.~XX, No.~X, XXXX~XXXX}%
{Queiroz \MakeLowercase{\textit{et al.}}: Computation-Limited Signals: A Channel Capacity Regime Constrained by Computational Complexity}


\maketitle
\begin{abstract}
In this letter, we introduce the computation-limited (comp-limited) signals, 
a communication capacity regime where the computational complexity of signal processing 
is the primary constraint for communication performance, overriding factors such as power or bandwidth.
We present the Spectro-Computational (SC) analysis, a novel mathematical framework designed to enhance classic concepts of information theory --such as data rate, spectral efficiency, and capacity -- to accommodate the computational complexity overhead of signal processing.
We explore a specific Shannon regime where capacity is expected to increase indefinitely with channel resources.  
However, we identify conditions under which the time complexity overhead can cause capacity to decrease rather than increase, leading to the definition of the comp-limited signal regime. 
Furthermore, we provide examples of SC analysis and demonstrate that the OFDM waveform falls under the comp-limited regime unless the lower-bound computational complexity of the $N$-point DFT problem verifies as $\Omega(N)$, which remains an open challenge in the theory of computation.
\end{abstract}

\begin{IEEEkeywords}
Capacity, Signal Processing, Computational Complexity, Information Theory, Fundamental Limits.
\end{IEEEkeywords}

\section{Introduction}
\IEEEPARstart{I}{nformation} theory introduces power and bandwidth 
as the fundamental resources to describe the capacity of a noisy
channel.
The development of clever physical layers, coupled with the adoption of larger spectrum resources, has enabled unprecedented data rates. Consequently, the computational resources required to process more bits per signal have grown accordingly, highlighting the trade-off between signal processing complexity and data rate.
Despite that, as far as we know, little knowledge have been
produced to correlate these performance indicators.

Some research
efforts propose unified models of computation and information theory but
without concerning about the interplay between the signal processing time complexity and capacity. 
For instance, works such as \cite{turingshannon-2019} concern about 
whether a discrete (Turing) machine is \emph{able} to compute a given channel 
capacity function. Other works bring the term ``complexity'' to information theory  
but with different meaning than that of the computation complexity.
This is the case of the communication complexity theory~\cite{raoyehudayoff2020}
and the ``Kolmogorov complexity''~\cite{chaitin_1987},
in which the term stands for the minimum number of message exchanges to solve a problem
in a distributed manner and the length of the irreducible form of an information, 
respectively.

{\color{black}
In this letter, we present an analytical framework referred to as the ``Spectro-Computational (SC) analysis''. 
With the SC analysis, we revisit classic concepts of information theory -- such as throughput, spectral 
efficiency and capacity -- in order to account for the signal processing computational complexity overhead. 
Our mathematical framework enhances and generalizes concepts we previously 
propose to study the complexity-throughput trade-off lying in the context of 
specific waveforms~\cite{fastenough-2022},~\cite{queiroz-wcl-19},~\cite{queiroz-access-2020},~\cite{queiroz-cost-ixs-19}.
Based on that, we refer to a specific Shannon capacity regime to derive a novel capacity 
regime in which computational complexity matters more than channel resources such as
bandwidth or received power.
}

The remainder of this letter is organized as follows. 
In Section~\ref{sec:rationale}, we review the background
and present the rationale for our mathematical framework.
In Section~\ref{subsec:scanalysis}, we present the mathematical framework of 
the SC analysis. In Section~\ref{sec:complimited}, we formalize the
comp-limited communication regime. In Section~\ref{sec:examples}, we
present practical examples of the SC analysis. In 
Section~\ref{sec:conclusion} we present our conclusion and future work.

{\color{black}\section{Background and Rationale}\label{sec:rationale} 
In this section, we review some key properties of asymptotic notation
that will support our analyses throughout this work (subsection~\ref{subsec:asymptotic}).
In subsection \ref{subsec:rationale}, we present the rationale of our proposal 
by discussing the interplay between computational complexity and channel resources in 
the Shannon communication system.

\subsection{Asymptotic Notation}\label{subsec:asymptotic}
The asymptotic analysis relate functions $f(N)$ and $g(N)$ as $N\to\infty$.
For the quantities of this work, we assume increasing non-negative functions. 
We follow the classic notation popular
in the literature of analysis of algorithms.
Thus, if $f(N)=\Theta(g(N))$, $f(N)=o(g(N))$ and $f(N)=\omega(g(N))$, it denotes 
that the order of growth of $f(N)$ is equal to, strictly lower than, or strictly 
higher than the order of growth of $g(N)$, respectively. 
{\color{black} Based on these notations, one can also define $O(.)$ and $\Omega(.)$ as 
follows,
\begin{eqnarray}
f(N)=O(g(N)) \Rightarrow [f(N)&=&o(g(N)) \text{ or } \nonumber\\
f(N)&=&\Theta(g(N)) ] \label{eq:o}\\
f(N)=\Omega(g(N)) \Rightarrow [f(N)&=&\omega(g(N)) \text{ or } \nonumber\\
f(N)&=&\Theta(g(N))]  \label{eq:omega}
\end{eqnarray}
}
Thus, these asymptotic notations can be defined according to Eqs.~(\ref{eq:asymptoticnotation}),
assuming existing limits and a real constant $c>0$.
\begin{subequations}
    \begin{empheq}[left={\displaystyle\lim_{N\to\infty}\frac{f(N)}{g(N)}=\empheqlbrace\,}]{align}
         c, & \quad \text{if $f(N)=\Theta(g(N))$} \label{eq:prop1}\\    
         0,  & \quad \text{if $f(N)=o(g(N))$} \label{eq:prop2}\\
         \infty,   &  \quad \text{if $f(N)=\omega(g(N))$ \label{eq:prop3}} 
    \end{empheq} \label{eq:asymptoticnotation}
\end{subequations}

\subsection{A Case for a Time Complexity-Constrained Signal Regime}\label{subsec:rationale}
In this subsection, we firstly review the channel resources considered by Shannon to describe
the capacity regimes. Then, we argue how these channel resources are related to the signal
processing computational complexity and argue for a time complexity-constrained capacity regime.

\subsubsection{Shannon Capacity Regimes}\label{subsec:shannonregimes}
Let us consider the Additive Gaussian White Noise (AWGN) channel capacity formula 
of Shannon, based on which the two classic channel capacity regimes of information 
theory derives from, namely, the Bandwidth-Limited Regime (BLR) and Power-Limited Regime (PLR). 
According to Shannon, the capacity of an AWGN channel is
\begin{eqnarray}
 C &=& W\log_2(1+\mathsf{SNR}) \quad \unit{bits/second} \label{eqn:shannon}\\
\mathsf{SNR} &=& \frac{\mathcal{P}}{WN_0} \label{eqn:snr}
\end{eqnarray}
where $W$ is the channel bandwidth,
$\mathsf{SNR}$ is the Signal-to-Noise Ratio (SNR),
$\mathcal{P}$ is the received signal power and $N_0$ is the
noise power spectral density. Eq.~(\ref{eqn:shannon}), establishes an upper 
bound for the data rate $R$ experienced by a $B(W)$-bit message in an 
AWGN channel. In other words, for a symbol period of 
$T_{\text{sym}}$, data rate $R$ and the spectral efficiency $SE$ are
given in Eq.~(\ref{eq:datarate}) and Eq.~(\ref{eq:se}), respectively.
\begin{eqnarray}
  R &=& \frac{B(W)}{T_{\text{sym}}} < C \quad \text{bits/second}\label{eq:datarate}\\
  SE &=& \frac{R}{W} \quad \text{bits/second/Hertz}\label{eq:se}
\end{eqnarray}
in which $T_{\text{sym}}$ is the signal period.
PLR results when SNR is very small ($\mathsf{SNR}\ll 1$).
In this case, one can approximate $\log_2(1+\mathsf{SNR})$ as $\mathsf{SNR}\log_2 e$, 
thereby $C$ becomes linear on $\mathcal{P}$ and is not affected by $W$. 
If SNR remains high ($\mathsf{SNR}\gg 1$) as $W$ grows, $C$ becomes proportional to $W$, 
leading to the BLR case. Note, however, that widening $W$ for a fixed $\mathcal{P}$
impairs SNR due to the resulting overall noise. In this case, the regime changes from
BLR to PLR as $W$ grows. 

\subsubsection{Channel Resources and Time Complexity}
Beyond power and spectrum, computational resources are also intrinsic to 
Shannon's communication system. In this system, a transmitter/receiver must be able to
``\emph{operate} on the message in some way to produce a signal''~\cite{Shannon48}. 
{\color{black} Specifically, in digital communication systems (the focus of this work), the computational resources required for this processing depend on the length of the transmitting message, which, in turn, is determined by the available channel resources.} Therefore, capacity regimes such as BLR and PLR directly affect 
the implied computational complexity. By its turn, such complexity overhead 
can impair time-related performance indicators (e.g., throughput, capacity) if
it is not neglected in the analysis.

\subsubsection{What If Channel Resources Grow Arbitrarily?}\label{subsubsec:snrregime}
To illustrate how time complexity can affect communication performance,
 consider the Shannon capacity formula (Eq.~\ref{eqn:shannon}) under the fixed 
SNR regime, {\color{black}i.e., $\mathsf{SNR}=c$, for a real constant $c>0$} (Eq.~\ref{eqn:snr}).
{\color{black}In this regime}, as $W\to\infty$,
$\mathcal{P}\to\infty$ accordingly to counter the resulting noise and keep the SNR constant.
Thus, in this case, $C=cW$ for some constant $c>0$, i.e., $C=\Theta(W)$ (Eq.~\ref{eq:prop1}).
{\color{black} In practice, this means that one might expect increasing system performance 
if more resources are assigned to the channel. Formally,
}
$\lim_{W\to\infty, \mathcal{P}\to\infty}C = \infty \quad \text{bits/second}$.

Let us now {\color{black}consider} the effect of the time complexity in the analysis.
Let $T(W)$ denote the number of computational instructions required to turn the $B(W)$-bit 
message into a $W$ Hertz signal. For a finite baseband processor (we further discuss this 
assumption in Section~\ref{subsec:scanalysis}), each instruction takes a runtime of
$t>0$ seconds. 
If one accounts for the time complexity overhead in the throughput,
the classic data rate formula of Eq.~(\ref{eq:datarate}) rewrites as
\begin{eqnarray}
  R_{\text{comp}}(W) &=& \frac{B(W)}{tT(W)+T_{\text{sym}}} < C \quad \text{bits/second}\label{eq:dataratecomp}
\end{eqnarray}

Under arbitrarily large channel resources, 
{\color{black}both $B(W)$ and $T(W)$ tend} to infinity. 
In this case, since $t$ does not depend on $W$, and $T_{\text{sym}}$ is 
bounded by a constant -- to ensure higher sampling rate as $W$ grows --,
the computational-constrained data rate solely depends on the ratio $B(W)/T(W)$ 
as $W\to\infty$. If $B(W)$ grows faster than $T(W)$, i.e., $B(W)=\omega(T(W))$,
then, by definition of Eq.~(\ref{eq:prop3}), $R_{\text{comp}}({W\to\infty})=B(W)/T(W)=\infty$.
{\color{black} This means that complexity does not become a bottleneck for the
overall system performance since $R_{\text{comp}}(W)$ grows arbitrarily on 
$W$, just as the non complexity-constrained data rate $R$ (Eq.~\ref{eq:datarate}) does. 

Conversely}, 
if $B(W)=o(T(W))$ then $R_{\text{comp}}(W\to\infty)=B(W)/T(W)=0$ (Eq.~\ref{eq:prop2}). 
{\color{black} 
In other words, in the fixed SNR regime of information theory, 
the upper bound for $R_{\text{comp}}(W)$ grows linearly on $W$. By contrast, when the computational 
complexity is considered in the $B(W)=o(T(W))$ case, that Shannon capacity bound becomes 
meaningless, since $T(W)$ causes $R_{\text{comp}}(W)$ to behave as a decreasing 
function of $W$. Moreover, if $T(W)$ is asymptotically optimal, this effect
 cannot be reversed except by increasing the computational resources as $T(W)$ grows.
Therefore, there might exist regimes in which the maximum achievable data rate
is limited by computational -- rather than channel -- resources.
}

\begin{figure*}[t]
\centering
  \includegraphics[scale=0.3]{./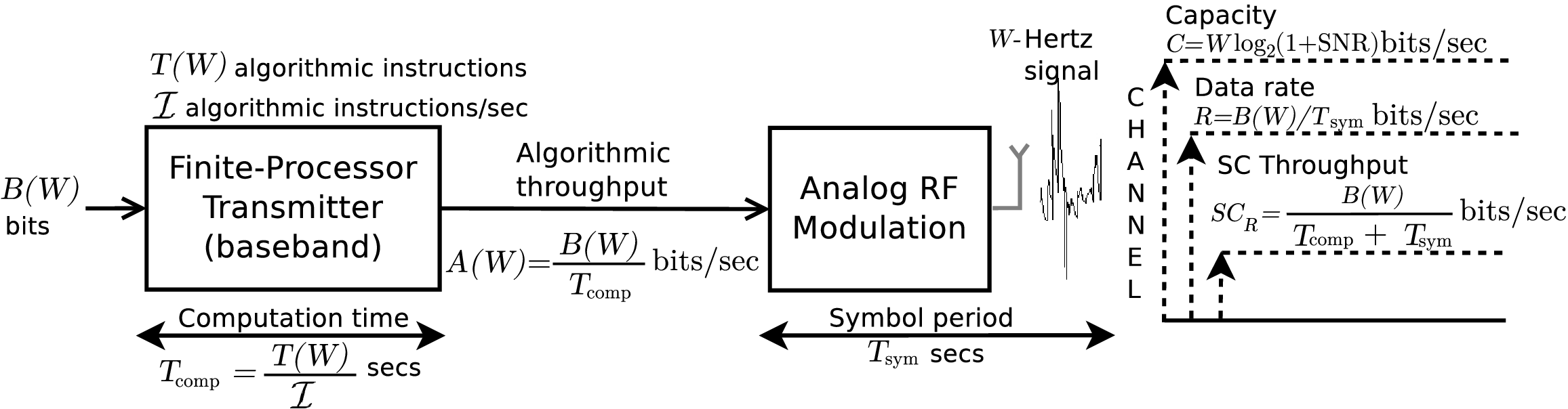}
  \caption{{\color{black} Communication system model of the SC analysis (receiver-side omitted).}
\label{fig:generalsc}
}
\end{figure*}
}

\section{Fundamentals of Spectro-Computational Complexity}\label{subsec:scanalysis}

{\color{black}
Throughout this section, we evolve the classic definitions of information theory
to account for the signal processing time complexity overhead. The resulting analytical 
framework we refer to as the ``SC'' analysis.
The term ``SC'' dates back to our earliest work~\cite{queiroz-cost-ixs-19}, in which we concerned
about the trade-off between spectral efficiency and computational complexity lying in a specific waveform. 
To avoid ambiguity with the classic definitions, we will adopt the same nomenclature of our 
original work to designate each novel enhanced definition. Therefore,
in what follows, we revisit the classic information theory definitions of
data rate $R$ (or throughput, Eq.~\ref{eq:datarate}) and spectral efficiency $SE$ (Eq.~\ref{eq:se}) 
to introduce the enhanced homologous concepts of ``SC throughput'' $SC_{R}$ (Eq.~\ref{eq:sctime}) and
``SC efficiency'' $SC_{SE}$ (Eq.~\ref{eq:sce}), respectively. 
We leave the definition of the ``SC capacity'' to Section~\ref{sec:complimited}.

\subsubsection{System Model and Assumptions}
Fig.~\ref{fig:generalsc} illustrates the transmitter of
our communication system for a $W$-Hertz channel. The analysis
is reciprocal for the receiver side. 
We concern about the throughput experienced by a \emph{particular} 
$B(W)$-bit message.
The message is turned into a baseband signal by a ``finite-baseband transmitter'' that 
can perform $\mathcal{I}>0$ instructions per second. {\color{black} This quantity represents 
all computational resources allocated to executing the algorithmic instructions. It is 
assumed to be finite due to the fundamental limits of chipset manufacturing~\cite{moore-delayed-2003}.
To turn the message into a signal, $T(W)$ computational instructions need to be 
performed (i.e., `time complexity').
The number of bits $B(W)$ sets the input length for the algorithms in the
baseband processor. Since it is a function of $W$ in the
fixed SNR-regime, the complexity $T$ can be written as a function of $W$ as well}.

\subsubsection{Computational Time and Algorithmic Throughput}
The baseband signal processing runtime $T_{\text{comp}}$ and the \emph{algorithmic
throughput}  $A(W)$  of the baseband processor in Fig.~\ref{fig:generalsc} are defined in 
Eq.~(\ref{eq:tcomp}) and Eq.~(\ref{eq:algthroughput}), respectively.
\begin{eqnarray}
T_{\text{comp}}&=&\frac{T(W)}{\mathcal{I}} \quad \text{seconds} \label{eq:tcomp}\\
A(W) &=& \frac{B(W)}{T_{\text{comp}}}= \frac{\mathcal{I}B(W)}{T(W)} \quad \text{bits/second} \label{eq:algthroughput}
\end{eqnarray}

The ``Analog RF Modulation'' block converts the signal to analog and 
performs the carrier modulation. The entire process corresponds to the 
symbol duration $T_{\text{sym}}$ seconds.

\subsubsection{Spectro-Computational Throughput}
Thus, in addition to $T_{\text{comp}}$, the signal carrying the $B(W)$-bit message will 
take $T_{\text{sym}}$ seconds. Based on this, we define the SC data rate (or throughput) $SC_R(W)$ as
\begin{eqnarray}
SC_R(W) =\frac{B(W)}{T_{\text{comp}}+T_{\text{sym}}} =
\frac{B(W)}{\displaystyle\frac{T(W)}{\mathcal{I}}+ T_{\text{sym}}} \quad \text{bits/second} \label{eq:sctime}
\end{eqnarray}
{\color{black}
Note that the algorithmic throughput $A(W)$ (Eq.~\ref{eq:algthroughput})
corresponds to a special case of the SC throughput $SC_R(W)$ (Eq.~\ref{eq:sctime}) when the
channel time $T_{\text{sym}}$ is set to $0$. Besides, as customary 
in the theory of computation, one may also consider the algorithmic performance
independent of a specific hardware. In our case, this stems
by setting $\mathcal{I}=1$}. Thus,
 \begin{eqnarray}
A(W) = SC_R(W) = \frac{B(W)}{T(W)} \quad \text{if $\mathcal{I}=1$ and $T_{\text{sym}}=0$} \label{eq:sc}
\end{eqnarray}
{\color{black}
The particular condition of Eq.~(\ref{eq:sc}) holds, for example,
for an asymptotic analysis of $SC_R(W)$ on $W\to\infty$.
In this case, the constants $T_{\text{sym}}$ and $\mathcal{I}$ can be neglected.}
Thus, the terms \emph{algorithmic throughput} and
\emph{SC throughput} can be interchangeable for the
asymptotic analyses {\color{black} presented throughout this work}.
Similarly, note {\color{black} also} that the rate units 
\text{bits/instruction} and \text{bits/second} are interchangeable
{\color{black} in these cases}.


\subsubsection{Spectro-Computational Efficiency}
Based on the SC throughput (Eq.~\ref{eq:sctime}), we introduce the SC Efficiency 
(SCE) to enhance the classic definition of SE (Eq.~\ref{eq:se}) with
time complexity.  This is given in Eq.~(\ref{eq:sce}).
\begin{eqnarray}
  SC_{SE}(W) &=& \frac{SC_R(W)}{W} \quad \text{bits/second/Hertz}\label{eq:sce}
\end{eqnarray}
{\color{black} Homologously to Eq.~(\ref{eq:sc}), $SC_{SE}(W)$ (Eq.~\ref{eq:sce}) 
can also be expressed in bits/instruction/Hertz if $\mathcal{I}=1$ and $T_{\text{sym}}=0$.
Next, we build on the definitions of this section to formalize a novel
complexity-constrained capacity regime.
}
{\color{black}
\section{Computation-Limited Signals}\label{sec:complimited}
In this section, we build upon the definitions of Section~\ref{subsec:scanalysis} 
to define the SC capacity. Based on this, we formalize a novel capacity regime we
refer to as computation-limited signals. 

\subsection{SC Algorithmic Capacity}\label{subsec:algcapacity}
We define the SC (algorithmic) capacity $SC_{C}(W)$ of a waveform
as the asymptotic upper bound for the SC throughput $SC_R(W)$ of Eq.~(\ref{eq:sctime}), 
i.e., $SC_R(W)=O(SC_C(W))$.
Assuming the fixed SNR regime discussed in Section~\ref{subsubsec:snrregime}, 
the constants $\mathcal{I}$ and $T_{\text{sym}}$ in Eq.~(\ref{eq:sctime}) can be 
neglected as $W\to\infty$. Thus, our analysis follows 
based on $SC_R(W)$ of Eq.~(\ref{eq:sc}). In that case, the upper-bound 
$SC_{C}(W)$ is defined as the ratio 
\begin{eqnarray}
SC_{C}(W) &=& \frac{B_{\text{max}}(W)}{\mathcal{L}(W)}\quad\text{bits/second}\label{eq:scc1}
\end{eqnarray}
and
\begin{eqnarray}
  SC_R(W)&=&O(SC_{C}(W)) 
\end{eqnarray}
In Eq.~(\ref{eq:scc1}), $B_{\text{max}}(W)$ and $\mathcal{L}(W)$ stand for 
the highest and lowest orders of growth that can be assumed for the 
numerator $B(W)$ and the denominator $T(W)$, respectively. {\color{black} 
We assume that the maximum number of bits
passing through the baseband processor grows linearly with the spectrum.
As we discuss in section~\ref{subsec:fairwifi}, this assumption is verified
in practical network standards. Therefore,}  under the fixed SNR regime, it results
\begin{eqnarray}
  B_{\text{max}}(W)=\Theta(C)=cW \quad \text{bits}
\end{eqnarray}

In turn, $\mathcal{L}(W)$ stands for the asymptotic lower bound of a computational 
problem. {\color{black} This lower bound complexity may be hard to derive in some
cases, as is the case of the $N$-point DFT problem required by {\color{black} an
Orthogonal Frequency Division Multiplexing (OFDM) signal
(as discussed in} Section~\ref{sec:examples}).}


\subsection{The Comp-Limited Signal Regime}\label{subsec:complim}
To ensure the SC throughput does not nullify as the channel resources grow, 
the waveform design might satisfy Lemma~\ref{def:conditionscalability}. 

\begin{lemma}[Condition of Scalability]\label{def:conditionscalability}
Under the fixed SNR regime {\color{black}of the Shannon capacity} (discussed in Section~\ref{subsubsec:snrregime}),
the SC throughput (Eq.~\ref{eq:sc}) nullifies as $W\to\infty$ unless $B(W)=\Omega(T(W))$.
\end{lemma}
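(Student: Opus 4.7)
The plan is to establish the lemma by contrapositive: assume $B(W) \neq \Omega(T(W))$ and show that $SC_R(W) \to 0$ as $W \to \infty$. The argument reduces to a direct application of the asymptotic machinery laid out in Section~\ref{subsec:asymptotic} together with the simplified form of the SC throughput under the assumed regime.

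First, I would invoke the definition of $\Omega$ given in Eq.~(\ref{eq:omega}), which states that $B(W) = \Omega(T(W))$ holds if and only if $B(W) = \omega(T(W))$ or $B(W) = \Theta(T(W))$. Negating this, the hypothesis $B(W) \neq \Omega(T(W))$ rules out both the $\omega$ and the $\Theta$ cases. Since the standing assumption of Section~\ref{subsec:asymptotic} is that the limit $\lim_{W\to\infty} B(W)/T(W)$ exists, the trichotomy encoded in Eq.~(\ref{eq:asymptoticnotation}) forces the remaining alternative $B(W) = o(T(W))$. The critical step here is precisely this appeal to trichotomy; one must be explicit that the three cases in Eq.~(\ref{eq:asymptoticnotation}) exhaust the possibilities under the limit-existence assumption, so that eliminating $\Theta$ and $\omega$ leaves only $o$.

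Next, I would apply Eq.~(\ref{eq:prop2}) to convert the asymptotic statement into a statement about the numerical limit, yielding $\lim_{W\to\infty} B(W)/T(W) = 0$. Finally, I would connect this limit back to $SC_R(W)$: under the fixed SNR regime analysed in Section~\ref{subsubsec:snrregime}, the constants $\mathcal{I}$ and $T_{\text{sym}}$ can be neglected as $W\to\infty$, so by Eq.~(\ref{eq:sc}) the SC throughput coincides asymptotically with the ratio $B(W)/T(W)$. Hence $SC_R(W)$ nullifies in the limit, completing the contrapositive and therefore the lemma.

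The main obstacle, although minor, is to justify cleanly that ``not $\Omega$'' translates into exactly ``$o$'' rather than some other behaviour (e.g.\ oscillation), and this hinges entirely on the existence of the limit assumed throughout Section~\ref{subsec:asymptotic}. The rest of the proof is essentially bookkeeping: one substitution into Eq.~(\ref{eq:sc}) and one appeal to Eq.~(\ref{eq:prop2}).
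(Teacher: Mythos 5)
Your proposal is correct and follows essentially the same route as the paper's own proof: negate $B(W)=\Omega(T(W))$ via Eq.~(\ref{eq:omega}), use the mutual exclusivity/exhaustiveness of the cases in Eq.~(\ref{eq:asymptoticnotation}) (under the limit-existence assumption) to conclude $B(W)=o(T(W))$, and then apply Eq.~(\ref{eq:prop2}) to get $SC_R(W\to\infty)=0$. Your version is merely a bit more explicit about the trichotomy step and the neglect of $\mathcal{I}$ and $T_{\text{sym}}$, which the paper handles implicitly by working directly with Eq.~(\ref{eq:sc}).
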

\begin{proof}
The condition of scalability is such that
\begin{eqnarray}
  SC_R(W\to\infty)=\lim_{W\to\infty} \frac{B(W)}{T(W)} &>& 0, \label{eq:conditionscalability}
\end{eqnarray}
To ensure Ineq.~\ref{eq:conditionscalability} holds, the time complexity must grow as fast as $B(W)$ at most, 
i.e. $B(W)=\Omega(T(W))$. It means either $B(W)=\Theta(T(W))$ (Eq.~\ref{eq:prop1}) 
or $B(W)=\omega(T(W))$ (Eq.~\ref{eq:prop3}). If none of these cases holds, then $B(W)=o(T(W))$ does,
since the conditions of Eq.~\ref{eq:asymptoticnotation} are mutually exclusive. Under this
latter case, it follows from Eq.~(\ref{eq:prop2}) that $SC_R(W\to\infty)=0$. Therefore,
the SC throughput $SC_R(W)$ nullifies as $W\to\infty$ unless $B(W)=\Omega(T(W))$.
\end{proof}

A particular signal implementation may not satisfy Lemma~\ref{def:conditionscalability}.
In some cases, overcoming that is just a matter of devising and implementing asymptotically 
faster signal baseband algorithms e.g., \cite{queiroz-wcl-19}.
{\color{black} We are particularly interested in checking whether the SC capacity
 $SC_C(W)$ remains greater than $0$ as $W$ increases}. If it does not, then it indicates that the signal data rate is constrained by computational resources. This limitation arises because the number of algorithmic instructions cannot be improved beyond the asymptotic lower bound $\mathcal{L}(W)$ present in $SC_C(W)$.
We refer to this as the comp-limited regime (Def.~\ref{def:complimited}).

\begin{definition}[Comp-Limited Signal Regime]\label{def:complimited}
A signal waveform is limited by computation (comp-limited) if its
SC (algorithmic) capacity $SC_C(W)$ (Eq.~\ref{eq:scc1}) {\color{black} nullifies as
$W\to\infty$.}
\end{definition}

Def.~\ref{def:complimited} translates the 
fact that there exist conditions under which the computational
resources of the baseband processor must grow arbitrarily 
(i.e., $\mathcal{I}\to\infty$) to prevent the time complexity
to nullify capacity as $W\to\infty$. \emph{Therefore,  in this regime,
capacity is limited by the computational -- rather than spectrum or power --
resources.}

}

{\color{black}
\section{Examples}\label{sec:examples}
In this section, we demonstrate different use cases of the
SC analysis. {\color{black} In one case study, we show how to perform equitable comparisons 
between signals constrained by different requisites of computational resources.
 In other case study, we study whether OFDM classifies as a comp-limited signal.}

\subsection{Common Parameters}\label{subsec:parameters}
For the analyses of this section, we assume a $N$-subcarrier
 OFDM signal spaced by $\Delta f$~Hz each and a symbol period
of $T_{\text{sym}}=1/\Delta f$ seconds. Given an $M$-point constellation
diagram, {\color{black} the number of bits per subcarrier is $\log_2 M$.
Since $M$ grows on the SNR, this number is constant in the fixed SNR regime. 
Thus, the total number of bits in the OFDM frame solely depends on $N$ and
we will assume it as equal to $N$} without loss of generality.

\subsection{A Fairer Wi-Fi Data Rate Comparison}\label{subsec:fairwifi}
\begin{figure}[t]
\centering
  \includegraphics[scale=0.25]{./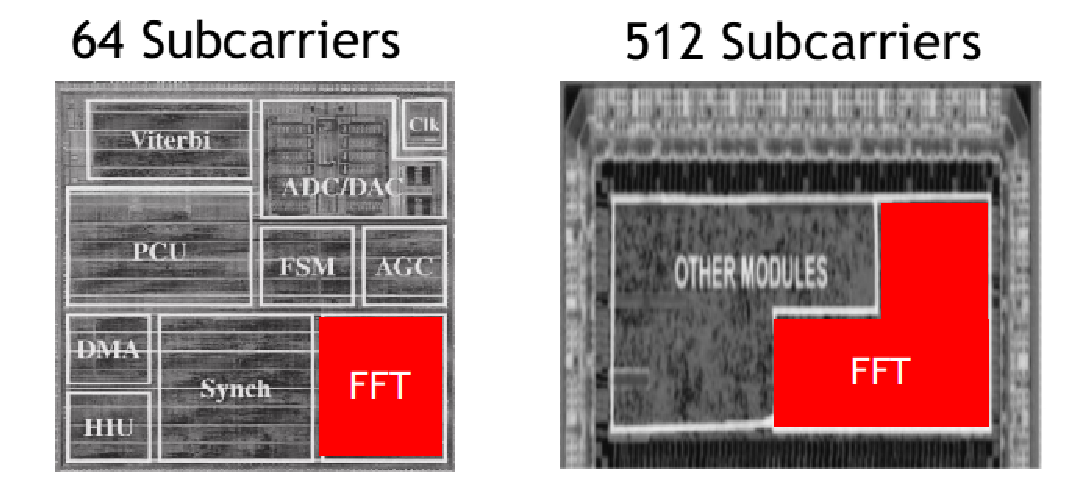}
  \caption{FFT baseband processor comparison: 512-point (right-hand side) vs. 64-point (left-hand side). The 512-point processor {\color{black} produces a $8\times$ faster signal} at the penalty of consuming larger computational resources. We show that this gain nearly halves {\color{black} if the signal data rate also accounts for runtime and both processors are provided the same computational resources.}
\label{fig:fftcomparison}}
\end{figure}

The IEEE 802.11ac standard {\color{black} claims a data rate improvement of}
$8\times$ in comparison to its legacy IEEE 802.11a counterpart.
This results from widening the bandwidth by a factor of $8\times$ 
keeping the legacy OFDM symbol period of $T_{\text{sym}}=3.2$~$\mu$s 
unchanged (without considering cyclic prefix, CP).
However, such an improvement comes at a cost {\color{black} that is not captured by the
classic data rate formula (Eq.~\ref{eq:datarate})}. 
Consider, for example, a DFT
computation of roughly $T_{\text{DFT}}(N)=N\log_2N$ algorithmic instructions~\cite{fastenough-2022}.
Thus, increasing $N$ from $64$ (IEEE 802.11a) to $512$ (IEEE 802.11ac) 
causes a \emph{non-negligible} {\color{black}impact} of roughly $T_{\text{DFT}}(512)/T_{\text{DFT}}(64)=12\times$
in time complexity. {\color{black} 
Despite this increase, the DFT computation time 
$T_{\text{comp}N}=T_{\text{DFT}}(N)/\mathcal{I}_{N}$ (Eq.~\ref{eq:tcomp}) 
($\mathcal{I}_{N}$ denotes the number of instructions per second
of a $N$-point DFT baseband processor) must not exceed 
the symbol duration $T_{\text{sym}}$. Otherwise, the implementation would compromise 
spectral efficiency by introducing idle periods between OFDM symbols~\cite{queiroz-cost-ixs-19}.
Hence, $T_{\text{DFT}}(N)/\mathcal{I}_{N}\leq T_{\text{sym}}=3.2$ must hold. Applying this constraint
to Eq.~(\ref{eq:tcomp}), the \emph{minimum}
performance required by the baseband processors of the $512$-point and $64$-point signals are 
$\mathcal{I}_{512}=T_{\text{DFT}}(512)/3.2=1440$ and $\mathcal{I}_{64}=120$
instructions/microsecond, respectively. This superior demand for computational resources
of the $512$-point signal is clearly illustrated in the die micrograph comparison of 
Fig.~\ref{fig:fftcomparison}\footnote{Illustration art from~\cite{queiroz2022analise}. Due to space constraints, 
we kindly request that the reader refer to the cited work for the technical references about the baseband signal processors.}.

We argue that a fairer data rate comparison should account for
 the signal processing runtime overhead (since it is implicit to the throughput perceived by the upper layers) 
under equitable computational resources, i.e., $\mathcal{I}_{64}=\mathcal{I}_{512}=1440$, in
this case study.
We accomplish this by comparing the SC throughput (Eq.~\ref{eq:sctime}) of 
the $512$-point and $64$-point signals that are $SC_R(512)=512/6.4=80$ bits/microsecond and
$SC_R(64)=64/(0.26+3.2)=18.46$ bits/microsecond, respectively. In this case, the gain $SC_R(512)/SC_R(64)\approx 4.3$
is nearly half of the claimed by $N=512$ setup. \emph{Therefore, 
computational complexity should not be neglected in the data rate analysis
of signals constrained by different computational resources. In this case, the
proposed SC framework can constitute a valuable tool to assist the comparison and
design of novel waveforms}.

\subsection{Is OFDM a Comp-Limited Signal?}
Next, we present a step-by-step analysis of the SC capacity of OFDM to answer whether
it classifies as a comp-limited signal.

{\color{black}
\subsubsection{Asymptotic model and Maximum Number of bits} 
In the fixed SNR regime of Shannon, capacity grows linearly with $W$.
Translated to OFDM, $N\to\infty$ as $W\to\infty$ because $\Delta f$ can be assumed constant
within the OFDM bandwidth $W=N\Delta f$~Hz. Additionally, $M$ remains constant as explained in section~\ref{subsec:parameters}. Consequently, the maximum number of bits in the OFDM frame increases linearly with $N$,
represented as $B_{\text{max-OFDM}}(N\to\infty)=cN$ for some real constant $c>0$.
}
\subsubsection{Complexity of OFDM}
The overall time complexity $T_{\text{OFDM}}(N)$ of the uncoded OFDM
signal results from the sum of its individual procedures, including,
(de)mapping, (I)DFT computation, addition/deletion of CP, signal detection, and 
equalization. Among these, DFT and signal detection are the 
most computationally expensive. However, considering that the input of the data 
signal detection typically involves a small fraction of $N$ (i.e., the pilot
subcarriers), and employing a low-complexity estimator (e.g., the least squares detector) 
with a complexity of $O(N)$, the DFT computation emerges as 
the most complex procedure of OFDM.  Therefore, we equate the asymptotic complexity 
$T_{\text{OFDM}}(N)$ of OFDM to the complexity $T_{\text{DFT}}(N)$ of DFT.
Consequently, as $N$ grows,  the constants and the
complexities of all other procedures can be neglected for the sake of the asymptotic analysis.
It's worth noting that one can proceed with the SC analysis of any OFDM algorithm since the only prerequisites are the waveform parameters (i.e., the number of bits and the symbol duration) and the complexity of the considered algorithm(s).
\subsubsection{SC Capacity of OFDM $SC_{C\text{-OFDM}}(N)$} 
Unfortunately, we are unable to define the SC capacity of OFDM due to the unresolved lower bound complexity of the DFT problem. Nevertheless, with consideration of the conjectures of $\Omega(N\log_2 N)$ and $\Omega(N)$~\cite{fastenough-2022}, Theorem~\ref{th:complimitedofdm} is established.

\begin{theorem}[Comp-Limited OFDM Signal]\label{th:complimitedofdm}
The uncoded OFDM signal is comp-limited unless the $N$-point DFT problem can be solved
in linear time complexity.
\end{theorem}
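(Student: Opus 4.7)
The plan is to apply the SC capacity formula (Eq.~\ref{eq:scc1}) directly, using the quantities already established for OFDM in the preceding subsections, and then to perform a case analysis on the two conjectured lower bounds for the $N$-point DFT problem. First, I would substitute $B_{\text{max-OFDM}}(N)=cN$ (from the fixed SNR regime argument on $W=N\Delta f$, with $\Delta f$ and $M$ constant) into the numerator, and identify the denominator $\mathcal{L}(N)$ with the asymptotic lower bound of $T_{\text{OFDM}}(N)$. Since the preceding discussion equates $T_{\text{OFDM}}(N)$ with $T_{\text{DFT}}(N)$ (all other OFDM constituent procedures are either constant or dominated by DFT for large $N$), the lower bound on $T_{\text{OFDM}}(N)$ coincides with that of $T_{\text{DFT}}(N)$. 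Hence $SC_{C\text{-OFDM}}(N)=cN/\mathcal{L}(N)$.

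Next, I would split the argument along the two conjectured regimes for $\mathcal{L}(N)$. In the $\Omega(N\log_2 N)$ case, the ratio becomes $cN/(N\log_2 N)=c/\log_2 N$, which by Eq.~(\ref{eq:prop2}) belongs to $o(1)$ and therefore nullifies as $N\to\infty$; by Def.~\ref{def:complimited}, OFDM is then comp-limited. In the $\Omega(N)$ case, the ratio becomes $cN/N=c$, i.e., $\Theta(1)$ by Eq.~(\ref{eq:prop1}), so $SC_{C\text{-OFDM}}(N)$ does not nullify and OFDM is not comp-limited. Combining both cases yields the ``unless'' clause of the theorem: OFDM fails to be comp-limited only when the DFT problem admits a linear-time solution.

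The main obstacle, which cannot be removed within this proof, is that $\mathcal{L}(N)$ for the DFT problem is not known in a non-trivial form: the current gap between the trivial $\Omega(N)$ input-reading bound and the $O(N\log_2 N)$ upper bound achieved by FFT-style algorithms is an open problem in the theory of computation. The proof is therefore best framed as a conditional reduction, translating the question of OFDM's comp-limited status into the question of whether the DFT admits a linear-time algorithm. Any strictly super-linear lower bound on DFT suffices to conclude comp-limitation, while only a matching linear-time algorithm would rescue OFDM from the regime; the theorem records precisely this dichotomy.
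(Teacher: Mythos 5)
Your proposal is correct and follows essentially the same route as the paper's own proof: substitute $B_{\text{max-OFDM}}(N)=cN$ and the conjectured DFT lower bound $\mathcal{L}(N)$ into Eq.~(\ref{eq:scc1}), then split on the $\Omega(N\log_2 N)$ versus $\Omega(N)$ cases to show the SC capacity nullifies in the former and stays bounded away from zero in the latter, invoking Def.~\ref{def:complimited}. Your added observation that any strictly super-linear lower bound would also force comp-limitation is a harmless slight generalization of the same dichotomy.
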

\begin{proof} Suppose the $\Omega(N\log_2N)$ conjecture holds for the DFT problem,
indicating that the complexity of the FFT cannot be outperformed.
Then, OFDM classifies as a comp-limited signal since its SC capacity is $SC_{C\text{-OFDM}}(N\to\infty)=N/(N\log_2N)=0$.
(Def.~\ref{def:complimited}). By contrast, if $\Omega(N)$ conjecture verifies,
OFDM is not comp-limited since, in this case,  $SC_{C\text{-OFDM}}(N\to\infty)=N/(c_1N)>0$ for
some constant $c_1>0$. \emph{Therefore, the uncoded OFDM signal 
is comp-limited unless the $N$-point DFT problem can be solved
in linear time complexity}. 
\end{proof}
Theorem~\ref{th:complimitedofdm}  further motivates the research at answering whether the $N$-point
DFT problem is $\Omega(N)$, which remains an open question in theory of computation~\cite{fastenough-2022}.

\section{Conclusion and Future Work}\label{sec:conclusion}
In this letter, we proposed a mathematical framework to  
enhance performance indicators from the information theory to 
account for the signal processing time complexity overhead. 
We concerned about why
time complexity is overlooked in classic formulas of information
theory such as data rate.
{\color{black}
One potential explanation we proposed stems from the fact that 
runtime is neglected
 at the expense of other non-temporal performance indicators, such 
as manufacturing cost and chip area. By ignoring the
superior computational resources required by faster signals, classic performance
indicators become susceptible to unfair comparative analyses. 
In a case study, we show that the data rate gain expected from 
widening bandwidth can nearly halve if the narrower signal is provided
the same computational resources required by the wider counterpart.
Thus, our framework can enable fairer comparative analyses 
among signals constrained by different requisites of computational resources.

In other case study, we showed that the signal processing complexity can impose 
a tighter upper bound in comparison the channel capacity. In this case, 
the computational resources can be more crucial for capacity than the channel 
resources. We referred to such novel regime as comp-limited signals. 
Regarding this, we showed that the uncoded OFDM signal
is comp-limited unless the lower-bound complexity of the $N$-point DFT problem verifies as $\Omega(N)$.
Future work may further enhance our model to account for the interplay between time complexity 
and bit error rate in algorithms such as error correction codes.
\bibliographystyle{IEEEtran}
\bibliography{IEEEabrv,refs}

\vfill

\end{document}